\newtheorem{definition}{Definition}
\newtheorem{theorem}{Theorem}
\newtheorem{corollary}{Corollary}
\newtheorem{proposition}{Proposition}
\newcommand{\SC}{\textsc{min set cover}\xspace}
\newcommand{\DS}{\textsc{min dominating set}\xspace}
\newcommand{\VC}{\textsc{min vertex cover}\xspace}
\newcommand{\is}{\textsc{max independent set}\xspace}
\newcommand{\IDS}{\textsc{min independent dominating set}\xspace}
\newcommand{\col}{\textsc{min vertex coloring}\xspace}
\newcommand{\fvs}{\textsc{min feedback vertex set}\xspace}
\title{\textbf{Parameterized (in)approximability of subset problems}}
\author{Edouard Bonnet \hspace*{1cm} Vangelis~Th.~Paschos\footnote{Also, Institut Universitaire de France} \\
PSL Research University, Universit\'e Paris-Dauphine \\
LAMSADE, CNRS UMR 7243, France \\
\texttt{edouard.bonnet@dauphine.fr,paschos@lamsade.dauphine.fr}}
\begin{document}

\maketitle

\begin{abstract}

We discuss approximability and inapproximability in FPT-time for a large class of \emph{subset problems} where a feasible solution~$S$ is a subset of the input data and the value of~$S$ is~$|S|$. The class handled encompasses many well-known graph, set, or satisfiability problems such as \DS, \VC, \SC, \is, \fvs, etc. In a first time, we introduce the notion of \emph{intersective} approximability that generalizes the one of \emph{safe} approximability introduced in \cite{safeapprox} 
and show strong parameterized inapproximability results for many of the subset problems handled. Then, we study approximability of these problems with respect to the dual parameter $n-k$ where~$n$ is the size of the instance and~$k$ the standard parameter. More precisely, we show that under such a parameterization, many of these problems, while \textbf{W[$\mathbf{\cdot}$]}-hard, admit parameterized approximation schemata.

\end{abstract}

\section{Introduction}\label{intro}

Parameterized approximation aims at bringing together two very active fields of the theoretical computer science, polynomial approximation and parameterized computation. We say that a minimization (maximization, respectively) problem~$\Pi$,  together with a parameter~$k$, is {\em parameterized $r$-approximable}, if there exists an FPT-time algorithm which computes a solution of size at most (at least, respectively)~$rk$ whenever the input instance has a solution of size at most (at least, respectively) $k$, otherwise, it outputs an arbitrary solution. This line of research was initiated by three independent works~\cite{dofemcciwpec,caihuiwpec,chgrogruiwpec}. For a very interesting overview, see~\cite{marx-approx}. 

Here, we first handle approximability and inapproximability in FPT-time of \emph{subset problems} where a feasible solution~$S$ is a subset of elements encoding the input that verifies some specific property, and the value of~$S$ is~$|S|$. 
This class includes problems as \DS, \VC, \SC, \is and numerous other well-known problems.

In Section~\ref{intersective}, we introduce the notion of \emph{intersective approximation} that generalizes the notion of \emph{safe approximation} defined  in~\cite{safeapprox}. An approximation is said to be safe, if it produces solutions containing an optimal solution. As defined in~\cite{safeapprox}, safe approximation only captures minimization problems and can be used in order to get strong inapproximability results. For instance, it is shown in~\cite{safeapprox} that a safe $c\log{n}$-approximation, for any $c  > 0$, for \DS, would be transformed into an exact FPT algorithm for this problem, contradicting $\textbf{FPT} \neq \textbf{W[2]}$. 

Although very interesting, safe approximation can be proved to be true rather rarely, since only \VC and some restrictive versions of a few minimization subset problems admit such approximations. Intersective approximability relaxes the requirement of inclusion of an optimal solution into the output approximation by just asking these two solutions to have a non-empty intersection. This relaxation makes that this approximability may apply also to maximization subset problems that are proved to have the same behaviour as those of minimization wrt to this new notion of approximability. Then, we use safe approximability, in order to establish meta-theorems producing as corollaries strong negative results for subset problems.

In the second part of the paper we handle \emph{dual parameterization} of subset problems. Given a subset problem~$\Pi$ of size~$n$ parameterized by the standard parameter~$k$ (i.e., the cardinality of its optimal solution), we call dual parameter for~$\Pi$ the parameter $n-k$. Parameterization by the dual parameter has been studied for many classical problems (see, for example,~\cite{cai,dowfel}) and for a lot of them it has been proved that although hard with respect to the standard parameter, they become easy when parameterized by the dual parameter. This is the case, for instance of the famous \col problem that, although not in \textbf{XP} when parameterized by the chromatic number~$\chi$ it is in \textbf{FPT} when parameterized by $n - \chi$~\cite{paramcompendium,dowfel}, or even for \is when parameterized by the size of a minimum vertex cover (this is a folklore result). In the opposite, as we will see, \SC, when parameterized by $m-k$, where~$m$ is the size of the set-system it is \textbf{W[1]}-hard. Apart the fact that dual parameterization is interesting per se, for numerous problems it has also a natural interpretation. For instance, for \col, dual parameter is the number of unused colors (assuming that~$n$ colors feasibly color the vertices of a graph); for \SC this parameter represents the number of unused sets (since~$m$ sets cover the ground set), for \is dual parameter is the size of a minimum vertex cover, etc. In Section~\ref{differential} we establish several interesting approximability results for subset problems parameterized by the dual parameter.

\section{Intersective approximation}\label{intersective}

As we have already mentioned in Section~\ref{intro}, intersective approximability extends the safe approximability of~\cite{safeapprox}, by allowing the approximate solutions computed not to thoroughly contain an optimal solution (for the case of minimization problems) but only to have a non-empty intersection with some optimal solution. 
\begin{definition}\label{intersectiveapprox}
A $\rho$-approximation algorithm~\texttt{A} is said to be \emph{intersective} for a problem~$\Pi$ if, when running on any instance~$I$ of~$\Pi$, it computes a $\rho$-approximate solution~$\mathtt{A}(I)$ and there exists an optimal solution $S_0$ of~$I$ such that $\mathtt{A}(I) \cap S_0 \neq \emptyset$. 
\end{definition}
Note that a safe approximation is a special case of an intersective approximation. Also, from Definition~\ref{intersectiveapprox}, one can see immediately that intersective approximations do not require the problem to minimise its solution. For instance, \textsc{max minimal vertex cover}\footnote{Given a graph~$G$, \textsc{max minimal vertex cover} consists of determining a maximum-cardinality vertex cover that is minimal for inclusion; an optimal solution for this problem is the complement of a minimum independent dominating set.}~\cite{mmvc_waoa} always admits intersective approximations.
\begin{theorem}\label{main}
If a minimization subset problem~$\Pi$ has an FPT intersective $(c \log n)$-approximation for some constant $c > 0$, then~$\Pi$ admits an exact FPT algorithm. The same holds for maximization subset problems without any condition on the value of the intersective approximation ratio.
\end{theorem}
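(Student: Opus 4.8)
The plan is to turn the intersective approximation algorithm $\mathtt{A}$ into a bounded-search-tree procedure that pins down an optimal solution one element at a time, using $\mathtt{A}$ only as a source of a \emph{small} set in which some optimal solution is guaranteed to have a member.

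\textbf{Minimization case.} Let $\mathtt{A}$ be an FPT intersective $(c\log n)$-approximation for $\Pi$, running in time $g(k)\,\mathrm{poly}(n)$, and fix a target $k$; we want to decide whether $\mathrm{opt}(I)\le k$ and, if so, produce such a solution. Run $\mathtt{A}$ on $I$. Since $|\mathtt{A}(I)|\le (c\log n)\,\mathrm{opt}(I)$, if $|\mathtt{A}(I)|>ck\log n$ we may safely answer ``no''. Otherwise $|\mathtt{A}(I)|\le ck\log n$ and, by intersectivity, some optimal $S_0$ contains an element $x\in\mathtt{A}(I)$. Not knowing which, we branch over all $x\in\mathtt{A}(I)$: for each $x$ we form the instance $I_x$ obtained by \emph{forcing $x$ into the solution} — again an instance of $\Pi$ (this is where the structural definition of a subset problem is used), for which $\mathrm{opt}(I_x)=\mathrm{opt}(I)-1$ whenever $x$ belongs to \emph{some} optimal solution of $I$, since a feasible solution of $I$ containing $x$ restricts to a feasible solution of $I_x$ of size one less and conversely — and we recurse with target $k-1$, returning the best solution found over all branches together with the forced elements (the cases $k\le 0$ being settled by a direct feasibility check). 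The branch with $x\in S_0$ certifies correctness: there $\mathrm{opt}(I_x)=\mathrm{opt}(I)-1\le k-1$, so the recursive call succeeds; and if $\mathrm{opt}(I)>k$ then in every branch $\mathrm{opt}(I_x)\ge \mathrm{opt}(I)-1>k-1$, so all calls (correctly) fail.

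For the running time, the search tree has depth at most $k$ (each level decreases the target) and branching factor at most $ck\log n$ at every node (both $k$ and $n$ only decrease down a branch), hence at most $(k+1)(ck\log n)^k$ nodes, each doing one call to $\mathtt{A}$ plus polynomial bookkeeping. Now $(ck\log n)^k=(ck)^k(\log n)^k$ is FPT in $k$: $(ck)^k$ depends only on $k$, while $(\log n)^k=2^{k\log\log n}$ is at most $2^{k^2}$ when $k\ge\log\log n$, and at most $2^{(\log\log n)^2}$ — which is $\le n$ for $n$ beyond a constant — when $k<\log\log n$. So the algorithm runs in time $f(k)\,\mathrm{poly}(n)$. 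This is precisely where the logarithmic bound on the ratio is needed: a polynomial ratio $\rho$ would give branching factor $\rho k$ and a tree of size $n^{\Theta(k)}$, i.e.\ only \textbf{XP}.

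\textbf{Maximization case.} The same scheme works but the short-circuit changes, and no bound on $\rho$ is required. To decide $\mathrm{opt}(I)\ge k$: run $\mathtt{A}$ on $I$; if $|\mathtt{A}(I)|\ge k$ we are done, since $\mathtt{A}(I)$ is itself a feasible solution of size at least $k$. Otherwise $|\mathtt{A}(I)|<k$ and (for $\mathrm{opt}(I)\ge 1$; the case $\mathrm{opt}(I)=0$ is trivial) $\mathtt{A}(I)$ is a nonempty set meeting some optimal $S_0$; we branch over its fewer than $k$ elements, force each one in, and recurse with target $k-1$, exactly as above (again $\mathrm{opt}(I_x)=\mathrm{opt}(I)-1$ when $x$ lies in some optimal solution, so the $x\in S_0$ branch succeeds when $\mathrm{opt}(I)\ge k$, and all branches fail when $\mathrm{opt}(I)<k$). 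Since the branching factor is now below $k$ whatever the ratio, the tree has at most $k^k$ nodes and the algorithm is FPT.

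I would flag two points as the ones deserving care. The first, conceptual, is that ``forcing an element into the solution'' is a legitimate operation within the class of subset problems — that $I_x$ is again an instance of $\Pi$, so $\mathtt{A}$ may be reapplied, and that its optimum drops by exactly one; this must be verified against the formal definition of a subset problem, and it is what makes the self-reduction go through uniformly for \DS, \VC, \SC, \is, and the rest. The second, routine but essential, is the win--win computation showing $(c\log n)^k$ is FPT in $k$, which also makes transparent why the minimization statement cannot tolerate a polynomial ratio whereas the maximization statement needs no ratio hypothesis at all.
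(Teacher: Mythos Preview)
Your minimization argument coincides with the paper's: both build a bounded search tree of depth~$k$ and arity at most $ck\log n$ by branching over the elements of the intersective approximation, and then use that $(ck\log n)^k=(ck)^k(\log n)^k$ is FPT in~$k$.

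For the maximization part the two differ. The paper dispatches it in a single line: an intersective approximation $S$ satisfies $|S|\le k$ (since $S$ is itself feasible, $|S|\ge k$ already certifies $\mathrm{opt}\ge k$), so one ``exhaustively searches all subsets of $S$'' in time $O^*(2^{|S|})\le O^*(2^k)$. You instead re-run the branching scheme from the minimization case, calling $\mathtt{A}$ afresh at every node and obtaining a tree with branching factor below $k$ and depth $k$, hence at most $k^k$ nodes. Your treatment is the more careful of the two: because $S$ is only guaranteed to \emph{intersect} some optimal $S_0$, not to contain it, enumerating subsets of a single $S$ cannot by itself locate a size-$k$ solution; the recursive self-reduction you spell out is what actually carries the argument, and the paper's sentence is best read as shorthand for it. Your version also makes transparent why no ratio hypothesis is needed in the maximization case --- the short-circuit ``$|S|\ge k\Rightarrow$ YES'' is what caps the branching factor, independently of~$\rho$.
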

\begin{proof}
Consider first some minimisation problem~$\Pi$, an intersective FPT approximation algorithm~\texttt{A} for~$\Pi$ achieving approximation ratio~$(c \log n)$ and let~$I$ be any instance of~$\Pi$. Compute $S=\mathtt{A}(I,k)$ a safe approximation for~$I$. If $|S| \geqslant ck\log n$ then answer that~$I$ is a NO-instance. Otherwise, denote by~$e_i$, $i = 1, \ldots, |S|$ the elements of~$S$ and define by~$I(e_i)$ the sub-instance of~$I$ where any solution~$S'$ is such that $S' \cup \{e_i\}$ is a solution for~$I$. Branch on $I(e_1),\ldots,I(e_{|S|})$. For all these instances, compute a $c \log n$-approximation and so on. When~$k$ elements have been taken in the solution, stop and return it.

We claim that the best solution found at a leaf of the so constructed branching tree is an optimal solution. Indeed, starting from the root one can, by definition of intersective approximability, move to a child which is conform to an optimal solution.

The branching tree has depth~$k$ since, at each step, one element is added in the solution, and arity bounded by~$ck \log n$. Hence, the number of its nodes is bounded by~$(ck)^k (\log n)^k$. On each node, some FPT computation is done, bounded by say~$f(k)p(n)$. So, the overall complexity is~$O((ck)^k f(k) (\log n)^k p(n))$. Taking into account that~$(\log n)^k$ is~FPT wrt~$k$~\cite{Sloper08}, concludes the first part of the proof dealing with minimization problems.

For maximization problems, just observe that an exhaustive search of all the subsets of an intersectively approximate solution~$S$ takes time $O^*(2^{|S|}) \leqslant 2^k$.~\end{proof}
Based upon Theorem~\ref{main}, the following holds for the intersective approximability of several well-known problems.
\begin{corollary}
Unless $\textbf{FPT}=\textbf{W[2]}$, no FPT intersective $(c \log n)$-approximation exists for either \SC, or \DS, for any constant $c > 0$. Unless $\textbf{FPT}=\textbf{W[1]}$, no FPT intersective approximation exists either for \is, for \textsc{max clique}, or for \fvs{}.
\end{corollary}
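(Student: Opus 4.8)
The corollary is meant to be an immediate consequence of Theorem~\ref{main}, so my plan is simply to pair each of the five problems with its classical parameterized hardness and with the appropriate half of that theorem, after the (routine) remark that each of them is a subset problem in the sense of the paper: a feasible solution is a set of input elements (chosen sets for \SC, vertices for \DS, \is, \textsc{max clique} and \fvs) and its value is its cardinality.

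\emph{Minimization cases.} Parameterized by the size~$k$ of the sought solution, both \SC (the number of chosen sets) and \DS are \textbf{W[2]}-hard, in fact \textbf{W[2]}-complete; this is classical (Downey--Fellows). Hence, if either of them admitted an FPT intersective $(c\log n)$-approximation for some constant $c>0$, the minimization part of Theorem~\ref{main} would turn it into an exact FPT algorithm, forcing $\textbf{FPT}=\textbf{W[2]}$ and contradicting the stated hypothesis. This is exactly the first assertion, and one should note that it is the $c\log n$ bound that is being ruled out, not every ratio, because that bound is precisely what keeps the $(\log n)^k$ factor in Theorem~\ref{main} FPT.

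\emph{Maximization cases.} \is and \textsc{max clique} are equivalent through graph complementation and each is \textbf{W[1]}-complete parameterized by solution size. Here I would invoke the maximization half of Theorem~\ref{main}, whose conclusion ``$\Pi$ has an exact FPT algorithm'' is reached from an intersective $\rho$-approximation for \emph{any} ratio~$\rho$ whatsoever, since one only enumerates subsets of the approximate solution (which may be assumed to have size less than~$k$) at a cost of $O^*(2^k)$. So an FPT intersective approximation of any ratio for either problem would give $\textbf{FPT}=\textbf{W[1]}$.

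\emph{The \fvs case.} This is the step I expect to require genuine care, and the main obstacle. Unlike the four problems above, \fvs is itself \textbf{FPT} under the standard parameter, so its entry in the corollary cannot be read off Theorem~\ref{main} directly (one should also keep in mind the mild point that on an acyclic input the unique optimal solution is empty, so the ``non-empty intersection'' clause is, as already for the safe approximation of \VC, to be understood over nonempty optima). My plan would be to transfer the \textbf{W[1]}-hardness of \is (equivalently \textsc{max clique}) to \fvs through a reduction compatible with intersective approximation --- one from which an intersective approximate feedback vertex set can be decoded into an intersective approximate solution of the \textbf{W[1]}-hard source problem --- and then apply Theorem~\ref{main} to that source. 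Designing such a reduction while keeping the relevant parameter under control and preserving the ``meets an optimal solution'' property is where I anticipate the real difficulty; the other four cases are immediate.
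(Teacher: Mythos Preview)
For \SC, \DS, \is and \textsc{max clique} your plan coincides with the paper's own argument: the corollary is presented there as an immediate consequence of Theorem~\ref{main} combined with the classical $\textbf{W[2]}$- and $\textbf{W[1]}$-completeness of these problems, and no further proof is written out.

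Your hesitation about \fvs is justified, but the workaround you sketch cannot succeed. \fvs parameterized by solution size is in \textbf{FPT} (a classical result), so simply running an exact FPT algorithm already produces an FPT intersective $1$-approximation: the output \emph{is} an optimal solution and hence intersects itself whenever the optimum is nonempty. Thus an FPT intersective approximation for \fvs exists unconditionally, and its existence cannot entail $\textbf{FPT}=\textbf{W[1]}$; no reduction from \is or \textsc{max clique} will change that, because the target statement you would be trying to establish is false. The paper offers no special treatment of \fvs either --- it lists it alongside the $\textbf{W[1]}$-hard problems as if Theorem~\ref{main} applied directly --- so the appearance of \fvs in the corollary is an oversight in the paper rather than a gap you can be expected to fill.
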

Obviously, the same holds for any \textbf{W[t]}-hard subset problem.

The proof of Theorem~\ref{main} gives also some hints for obtaining FPT algorithms for intersectively approximable problems. For instance, for \VC, or for \textsc{max minimal vertex cover} application of the algorithm of Theorem~\ref{main} derives~FPT algorithms with respect to the standard parameter. Of course inclusion of these problems in~\textbf{FPT} is already known, but this could be the case for other problems of still unknown status. 

Intersective approximability can also be extended to several problems that are not subset problems. We just sketch such an extension to coloring problems. A solution for a $k$-coloring can be seen as $k$ sets $S_1,\ldots,S_k$ where~$S_i$ is the set of vertices (or edges) receiving color~$i$. A $\rho$-intersective approximation to a $k$-coloring problem can be defined as an $h$-coloring $S'_1,\ldots,S'_h$ such that there exists an optimal solution $S_1,\ldots,S_k$ with $k \geqslant h/\rho$ and two integers~$i,j$ satisfying $S_i=S'_j$. Under this definition, the following can be proved with a proof similar to that of Theorem~\ref{main}.
\begin{corollary}
If a $k$-coloring problem~$\Pi$ has an FPT intersective $(c \log n)$-approximation for some constant $c > 0$, then~$\Pi$ admits an exact~FPT algorithm. 
\end{corollary}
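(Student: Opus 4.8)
The plan is to mimic the branching argument from the proof of Theorem~\ref{main}, adapting it to the colour-class formulation of intersective approximability sketched just above the statement. First I would run the FPT intersective $(c\log n)$-approximation algorithm \texttt{A} on the input, parameterised by the target number of colours~$k$. It produces an $h$-colouring $S'_1,\ldots,S'_h$ with, by the ratio guarantee, $k \geq h/(c\log n)$; equivalently, if the algorithm returns a colouring with more than $ck\log n$ classes we may immediately answer NO, so we assume $h \leq ck\log n$. By the definition of intersective approximation for colourings there is an optimal $k$-colouring $S_1,\ldots,S_k$ and indices $i,j$ with $S_i = S'_j$. The point is that we do not know which class $S'_j$ is the ``good'' one, but there are at most $h \leq ck\log n$ candidates, so we branch over all of them.

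Next, for each candidate class $S'_j$ from the approximate solution, I would fix $S'_j$ as one colour class of the solution under construction, remove those vertices from the instance, and recurse on the residual instance with target parameter $k-1$. This is exactly the analogue of the sub-instance $I(e_i)$ in the proof of Theorem~\ref{main}: committing to one correct colour class of an optimal solution leaves a sub-instance that still has an optimal solution with one fewer colour, and the intersective property guarantees that among the $h$ children of the current node at least one is ``conform'' to this optimal solution. Iterating, after $k$ levels of branching we have committed to $k$ colour classes, which partition the vertex set, i.e.\ we have produced a genuine $k$-colouring; the best colouring found over all leaves of the tree uses at most $k$ colours whenever the instance is a YES-instance.

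Finally I would bound the running time exactly as before. The branching tree has depth at most~$k$ and arity at most $ck\log n$, hence at most $(ck)^k(\log n)^k$ nodes; at each node we invoke \texttt{A}, an FPT routine of cost $f(k)p(n)$, plus polynomial bookkeeping to build the residual instance. Using the fact that $(\log n)^k$ is FPT in~$k$ (via the standard win/win: either $\log n \leq k$ and the factor is $k^k$, or $n$ is small enough that $p(n)$ already absorbs it), the total is FPT in~$k$, giving an exact FPT algorithm for~$\Pi$.

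The main obstacle, and the only place the argument is not a verbatim transcription of Theorem~\ref{main}, is checking that the notion of ``residual instance after committing to one colour class'' behaves well: one needs that a $k$-colouring problem in the intended sense is closed under deleting a colour class of an optimal solution (the remaining colouring is still optimal for the residual instance with parameter $k-1$), and that the intersective guarantee, which only promises \emph{one} shared class $S_i=S'_j$, is enough to keep the branching ``conform'' to a fixed optimal solution all the way down. For the natural colouring problems (vertex colouring, edge colouring, and the like) this is immediate because colour classes are independent sets / matchings and the property is hereditary under vertex/edge deletion, so the reduction goes through; I would state this closure condition explicitly as the hypothesis making the corollary precise.
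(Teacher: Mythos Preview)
Your proposal is correct and follows precisely the route the paper intends: the paper gives no detailed proof for this corollary but simply says it ``can be proved with a proof similar to that of Theorem~\ref{main}'', and your branching argument is exactly that adaptation. Your explicit identification of the closure hypothesis (that removing a colour class of an optimal solution yields a residual instance with optimum $k-1$) is a useful precision that the paper leaves implicit.
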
 

\section{Using polynomial approximation to design FPT approximation schemata for W[$\mathbf{\cdot}$]-hard subset problems}\label{differential}

In what follows, given a problem~$\Pi$ with data-size~$n$ and standard parameter~$k$,~D-$\Pi$ denotes~$\Pi$ parameterized by $n-k$. Note that another (sometimes more comprehensive, in particular when dealing with approximation issues) way to see~D-$\Pi$ is to see it as a ``new'' problem having the same data-size and feasibility constraints as~$\Pi$ and whose goal is inverse to that of~$\Pi$, i.e.,~D-$\Pi$'s goal becomes minimization (resp., maximization) when~$\Pi$'s goal is maximization (resp., minimization). Obviously, $\textrm{D-D-}\Pi = \Pi$. Note finally that, as already mentioned in Section~\ref{intro}, for some~$\Pi$'s, the corresponding~D-$\Pi$ versions have natural expressions. For example, for $\Pi = \col$,~D-$\Pi$ becomes the problem of maximizing the number of unused colors; for $\Pi = \SC$,~D-$\Pi$ consists of maximizing the number of unused sets; when $\Pi = \IDS$,~D-$\Pi$ is \textsc{max minimal vertex cover}, etc.

For some hard~$\Pi$'s, the corresponding~D-$\Pi$ versions can be proved to be in~\textbf{FPT}. This is the case, for instance, of \col~\cite{paramcompendium,dowfel} that is not in~\textbf{XP}, or of \IDS. For this latter problem, $n-k$ is the size of a maximum minimal vertex cover which is bigger than a minimum vertex cover which is bigger than the treewidth of the input graph. The fact that \IDS parameterized by treewidth is FPT, concludes inclusion of \IDS parameterized by $n-k$ in~\textbf{FPT}. 

However, for many other problems, this good news is no more true. Consider, for instance, D-\SC where, given a set~$S$ of~$m$ subsets of a universe~$U$ of~$n$ elements, one wishes to find the maximum number of subsets of~$S$ that can be removed in order that the remaining subsets still cover~$U$. Let~$k$ be the standard parameter for D-\SC. 
\begin{proposition}\label{dsc-hardnes}
D-\SC is \textbf{W[1]}-hard when parameterized by~$k$ and \textbf{W[2]}-complete when parameterized by $m-k$.
\end{proposition}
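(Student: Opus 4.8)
The plan is to treat the two parameterizations separately: the parameterization by $m-k$ is essentially definitional, whereas the parameterization by the standard parameter $k$ needs an honest parameter-preserving reduction from \is.

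First, for the parameterization by $m-k$. Recall that in D-\SC the standard parameter $k$ counts the sets that are \emph{removed}, so $m-k$ is exactly the number of sets that are \emph{retained}, i.e.\ the size of the surviving cover. Hence the question ``can at least $k$ sets be deleted?'', read with parameter $\kappa:=m-k$, is verbatim the question ``does the set-system admit a cover using at most $\kappa$ of its $m$ sets?'', parameterized by $\kappa$ --- that is, \SC under its standard parameter. Membership in \textbf{W[2]} is then inherited from \SC $\in\textbf{W[2]}$, and \textbf{W[2]}-hardness from the classical \textbf{W[2]}-hardness of \SC (equivalently \DS). The correspondence $k\mapsto m-k$ is a polynomial-time, parameter-preserving equivalence in both directions, so D-\SC parameterized by $m-k$ is \textbf{W[2]}-complete with essentially no further work.

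Next, for the parameterization by $k$, I would reduce from \is, which is \textbf{W[1]}-complete under its standard parameterization. Given a graph $G=(V,E)$ and an integer $k$, build a set-system over the ground set $U=E$ with one set $S_v$ per vertex $v\in V$, namely $S_v=\{e\in E : v\in e\}$; this takes polynomial time and uses $m=|V|$ sets. The key observation is that a subcollection $\{S_v : v\in V'\}$ covers $U$ if and only if $V'$ is a vertex cover of $G$; consequently, deleting the sets indexed by a subset $V''\subseteq V$ leaves a cover exactly when $V\setminus V''$ is a vertex cover, i.e.\ exactly when $V''$ is an independent set. Therefore the maximum number of removable sets equals the independence number $\alpha(G)$, so the constructed instance has value $\geq k$ iff $G$ has an independent set of size $\geq k$, and the target parameter $k$ is unchanged. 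Since \is parameterized by $k$ is \textbf{W[1]}-hard, so is D-\SC parameterized by $k$.

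There is no serious obstacle; the points needing a little care are bookkeeping. One must make sure it is the parameter $k$ of D-\SC --- the count of \emph{deleted} sets --- that is preserved, not $m-k$; one should note that ``at least $k$ sets removable'' is equivalent to ``exactly $k$ sets removable'', since any subset of a removable collection is removable (keeping more sets can only help coverage); and one should handle the degenerate instance $E=\emptyset$, where $U=\emptyset$. If a nonempty universe is required, add a fresh common element $\star$ to every $S_v$: then the surviving sets must be nonempty, so the maximum number of removable sets becomes $\min(\alpha(G),|V|-1)$, which still equals $\alpha(G)$ as soon as $G$ has an edge, and the remaining case $k=|V|$ (asking whether $G$ is edgeless) is polynomial-time solvable and can simply be excluded from the reduction.
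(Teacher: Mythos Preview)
Your proof is correct and follows essentially the same approach as the paper: for the $m-k$ parameterization you use the definitional equivalence with \SC under its standard parameter, and for the $k$ parameterization you use the vertex--incidence encoding ($U=E$, $S_v=\{e:v\in e\}$) to identify D-\SC on such instances with \is. The paper's proof is terser (it simply observes that this restriction of D-\SC \emph{is} \is), while you spell out the correspondence and add bookkeeping for degenerate cases, but there is no substantive difference.
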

\begin{proof}
The second claim is easy to be proved: $m-k$ for D-\SC is the standard parameter for \SC and given a solution~$S_0$ for the former problem, one can take $S \setminus S_0$ as solution for the latter. 

To prove the \textbf{W[1]}-hardness of D-\SC when parameterized by its standard parameter, just observe that the restriction of this problem to instances where each set represents a vertex and contains all the edges incident to that vertex,
results in a \is problem that is \textbf{W[1]}-complete when parameterized by the standard parameter. 
\end{proof}
Note that similar negative results hold also for:
\begin{itemize}
\item \VC that is \textbf{W[1]}-complete when parameterized by the size of a maximum independent set~\cite{cross-parameterization}
\item \fvs that is also \textbf{W[1]}-complete when parameterized by the dual parameter~\cite{paramcompendium,fernauhdr,Khot:2002:PCF:638032.638041} (the dual parameterization of \fvs is called \textsc{vertex induced forest}).
\end{itemize}
In what follows, we give a sufficient condition under which, given a subset problem~$\Pi$, problem D-$\Pi$ admits an approximation schema parameterized by the parameter ``size of the instance minus standard parameter of  $\Pi$''.
\begin{theorem}\label{diffschema}
Consider a subset problem~$\Pi$ with standard parameter~$k$ and set $k_D = n - k$, where~$n$ is the size of the data describing~$\Pi$. Then:
\begin{enumerate}
\item\label{ena} if~$\Pi$ is a minimization problem and is approximable in polynomial time within ratio at most~$c\log{n}$, for some $c > 0$, problem~D-$\Pi$ parameterized by~$k$ admits a parameterized approximation schema;
\item\label{dyo} if~$\Pi$ is a maximization problem, problem D-$\Pi$ parameterized by~$k$ admits a parameterized approximation schema independently on the polynomial approximation ratio of~$\Pi$.
\end{enumerate}
\end{theorem}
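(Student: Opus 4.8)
The plan is to exploit the observation that D-$\Pi$'s objective is exactly complementary to that of $\Pi$: if the ground set has size $n$ and $\Pi$ seeks a solution of size $k$, then D-$\Pi$ seeks to maximize (resp.\ minimize) $n-k$, which is the same thing as minimizing (resp.\ maximizing) $k$. The parameter for D-$\Pi$ is $k = n - k_D$, where $k_D$ is the ``size of the solution we want to leave out''. For case~(\ref{ena}), $\Pi$ is a minimization problem, so D-$\Pi$ is a maximization problem whose parameter $k$ is the \emph{small} quantity (the minimum cost of a feasible solution of $\Pi$); the key point is that any algorithm, exact or approximate, for $\Pi$ that returns a solution of size $s$ automatically gives a D-$\Pi$-solution leaving out $n-s$ elements, and conversely $\mathrm{opt}_{\mathrm{D}\text{-}\Pi} = n - \mathrm{opt}_\Pi$. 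So I would run the polynomial-time $c\log n$-approximation for $\Pi$: if it returns a solution of size at most $k$ (equivalently, leaving out at least $k_D$ elements), we are done and have in fact solved D-$\Pi$ exactly enough, since $n - c k \log n$ leaves-out versus $n-k$ differs only on the ``$k$ side''. The subtle issue is that a multiplicative guarantee on $\Pi$ (size $\le c k \log n$ when $\mathrm{opt}_\Pi \le k$) does \emph{not} by itself give a multiplicative guarantee on $n - (\cdot)$; one must argue separately.

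The cleaner route, and the one I would actually write, is to case-split on the size of $k$ relative to $n$. Fix the desired D-$\Pi$ ratio $1-\varepsilon$ (we want to leave out at least $(1-\varepsilon)k_D$ elements whenever $\mathrm{opt}_{\mathrm{D}\text{-}\Pi} \ge k_D$, i.e.\ whenever $\mathrm{opt}_\Pi \le k = n - k_D$). If $k$ is ``large'' — say $k \ge \varepsilon n / (c\log n)$, which forces $n$ to be bounded by some function of $k$ and $\varepsilon$ — then the whole instance has size FPT in the parameter, and we can solve D-$\Pi$ exactly by brute force. If $k$ is ``small'', i.e.\ $c k \log n \le \varepsilon n$, then running the polynomial $c\log n$-approximation for $\Pi$ yields a $\Pi$-solution of size $\le c k \log n \le \varepsilon n$ (using $\mathrm{opt}_\Pi \le k$), hence a D-$\Pi$-solution leaving out $\ge n - \varepsilon n = (1-\varepsilon)n \ge (1-\varepsilon)k_D$ elements, since $k_D \le n$. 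That is exactly a parameterized approximation schema for D-$\Pi$. The borderline regime where neither inequality is comfortably satisfied is handled by choosing the threshold so the two cases overlap; a short lemma that $(\log n)^{O(1)}$ versus $n$ gives the branching point as a function of $k$ alone (as cited via~\cite{Sloper08}) closes the gap.

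For case~(\ref{dyo}), $\Pi$ is a maximization problem and D-$\Pi$ is a minimization problem; now $k = \mathrm{opt}_\Pi$ is again the quantity controlling things, but there is no logarithmic approximation hypothesis. Here I would instead observe that, as in the proof of Theorem~\ref{main}, an intersective-type / guessing argument works: if $k$ is small, the target D-$\Pi$-solution has size $n-k$, and we want a D-$\Pi$-solution of size at most $(1+\varepsilon)(n-k)$; equivalently, we must include in our ``left-out'' $\Pi$-part at least $\lceil k/(1+\varepsilon) \rceil$ elements — a constant fraction of the already-small $k$. Exhaustively trying all subsets of size about $k/(1+\varepsilon)$ of a maximal feasible $\Pi$-solution (or of the whole ground set when $n$ is FPT-bounded) costs $O^*\!\big(2^{k}\big)$ by the same binomial bound used at the end of Theorem~\ref{main}'s proof, and any feasible completion gives the desired D-$\Pi$-solution. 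Again a large-$k$ / small-$k$ dichotomy makes the instance either FPT-sized (solve exactly) or makes the approximate D-$\Pi$ target loose enough to hit by a bounded-size guess.

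The main obstacle, in both parts, is the translation between a \emph{multiplicative} guarantee on the small side ($k$ or $\mathrm{opt}_\Pi$) and a \emph{multiplicative} guarantee on the large side ($n-k = k_D$): a good approximation of $\mathrm{opt}_\Pi$ is, a priori, a terrible approximation of $n - \mathrm{opt}_\Pi$. The resolution is precisely the win/win of the paragraph above — either $n$ is comparable to (a function of) $k$, in which case exact computation is FPT, or $n \gg k$, in which case $k_D \approx n$ and even a crude additive loss of order $k$ (or $\varepsilon n$) is a $(1\pm\varepsilon)$-multiplicative loss relative to $k_D$. Getting the thresholds to line up, and invoking that $(\log n)^k$ is FPT in $k$, are the only technical points; everything else is the bookkeeping of the complementation $\mathrm{opt}_{\mathrm{D}\text{-}\Pi} = n - \mathrm{opt}_\Pi$.
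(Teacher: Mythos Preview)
Your plan for Item~\ref{ena} is exactly the paper's argument: run the polynomial $c\log n$-approximation for~$\Pi$, take the complement as a D-$\Pi$ solution, and observe that the ratio $(n-\rho k)/(n-k)$ is at least $1-\epsilon$ whenever $n$ exceeds a threshold of the form $(\rho/\epsilon)k$; otherwise $n$ is bounded by a function of $k$ and~$\epsilon$ (using $\rho\le c\log n$), and brute force in $O^*(2^n)$ is FPT. Your threshold and the paper's differ only cosmetically.

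For Item~\ref{dyo}, however, you over-complicate things and introduce a genuine gap. The paper's proof is \emph{the same} dichotomy as in Item~\ref{ena}: run any polynomial-time approximation for~$\Pi$ (even the trivial one returning the empty $\Pi$-solution, i.e.\ $\rho=0$), take the complement, and note that $(n-\rho k)/(n-k)\le 1+\epsilon$ as soon as $n\ge (2/\epsilon)k$; otherwise $n<(2/\epsilon)k$ and $O^*(2^n)$ brute force is already FPT in~$k$. No guessing is needed at all. Your proposed route --- ``include in the left-out $\Pi$-part at least $\lceil k/(1+\epsilon)\rceil$ elements'' and find such a set by ``exhaustively trying all subsets of size about $k/(1+\epsilon)$ of a maximal feasible $\Pi$-solution'' --- has two problems. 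First, the arithmetic is off: $n-s\le(1+\epsilon)(n-k)$ is equivalent to $s\ge(1+\epsilon)k-\epsilon n$, not to $s\ge k/(1+\epsilon)$; in particular, once $n\ge(1+1/\epsilon)k$ the required $s$ is non-positive and the empty set already works. Second, and more seriously, a \emph{maximal} feasible $\Pi$-solution can be arbitrarily small (think of a maximal independent set consisting of a single universal vertex), so searching its subsets cannot in general produce a feasible $\Pi$-solution of any prescribed positive size; and searching size-$\Theta(k)$ subsets of the whole ground set costs $\binom{n}{\Theta(k)}=n^{\Theta(k)}$, which is XP, not FPT --- the appeal to ``the same binomial bound used at the end of Theorem~\ref{main}'s proof'' does not apply, because there the search space had size at most~$k$, not~$n$. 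You do land on the correct large-$n$/small-$n$ dichotomy in your final sentence; the point you are missing is that in the large-$n$ regime the approximate target is so loose that the \emph{trivial} D-$\Pi$ solution already meets it, so the guessing machinery is both broken and unnecessary.
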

\begin{proof}
In order to prove Item~\ref{ena}, consider a problem~$\Pi$ admitting the conditions of the item, its D-version~D-$\Pi$, a $\rho$-approximation algorithm~\texttt{A} for~$\Pi$ and denote by~$k'$ the cardinality of the solution returned by~\texttt{A}. The complement of this solution is a solution of size $k'_D = n - k'$ for~D-$\Pi$, while the size of the optimum is $k_D = n - k$. So, the approximation ratio guaranteed for~D-$\Pi$ is (recall that~D-$\Pi$ is a maximization problem):
\begin{equation}\label{difratmin}
\frac{k'_D}{k_D} = \frac{n - k'}{n-k} \geqslant \frac{n - \rho k}{n-k}
\end{equation}
Fix some constant $\epsilon > 0$. Then, to make the last fraction in~(\ref{difratmin}) greater than $1 - \epsilon$, it must hold that:
\begin{equation}\label{difratmin2}
n \geqslant \left(\frac{\rho - 1 + \epsilon}{\epsilon}\right)k
\end{equation}
If~$n$ does not satisfies~(\ref{difratmin2}), then the simple~$O^*(2^n)$-time algorithm that builds all the data-subsets and chooses the one that constitutes the best solution, runs in FPT time as far as $\rho \leqslant c\log{n}$ for some constant $c >0$.

Proof of Item~\ref{dyo} is similar. Here~D-$\Pi$ is a minimization problem and we need that:
\begin{equation}\label{difratmax}
\frac{k'_D}{k_D} = \frac{n - k'}{n-k} \leqslant \frac{n - \rho k}{n-k} \leqslant 1 + \epsilon \Longrightarrow
n \geqslant \left(\frac{1 + \rho + \epsilon}{\epsilon}\right)k \Longrightarrow n \geqslant \left(\frac{2}{\epsilon}\right)k
\end{equation}
since $\rho < 1$ and~$\epsilon$ is considered very small. Once again, if~$n$ does not satisfies~(\ref{difratmax}), an exhaustive search requiring~~$O^*(2^n)$-time becomes an FPT-algorithm.~\end{proof}

Observe that \IDS does not meet conditions of Theorem~\ref{diffschema}. Indeed, the best known polynomial time achievable approximation ratio for \IDS is $\Delta+1$, where~$\Delta$ is the maximum degree of the input graph and can be aribatrily larger than~$O(\log{n})$, and it is inapproximable within~$\Delta^{1 - \epsilon}$, for any $\epsilon > 0$ in polynomial time~\cite{halmmis}.

Note finally, that the scope of Theorem~\ref{diffschema} encompasses more problems than subset ones, for instance, problems whose optimal solutions are structures rather than subsets of the input data. Coloring problems are such problems. Of course, the classical \col problem does not meet conditions of Item~\ref{ena}, since it is inapproximable in polynomial time within better than~$n^{1-\epsilon}$, for any $\epsilon > 0$~\cite{zucker}. Moreover, as it is proved in~\cite{dptas}, D-\col is \textbf{APX}-hard. On the other hand, \textsc{min edge coloring} that is polynomially approximable within ratio~4/3~\cite{Vizing64} or even the \textsc{max edge coloring} problem\footnote{Given an edge-weighted graph~$G$, the weight of a color~$M$ (that is a matching of~$G$) is defined as the weight of the ``heaviest'' edge of~$M$ and the objective is to determine a partition of the edges of~$G$ into matchings, minimizing the sum of their weights.} that is approximable within ratio~2 in polynomial time~\cite{Kesselman07}, for both problems considering~$m$ (the size of the edge-set) as size of the input-data.

In any case, Theorem~\ref{diffschema} has a number of interesting corollaries, seen as good news, for several subset problems. The following corollary summarizes some of them.
\begin{corollary}
D-\SC, D-\DS, D-\is, D-\textsc{max clique}, D-\textsc{max set packing}, D-\fvs, D-\textsc{min edge coloring}, D-\textsc{max edge coloring}, parameterized by~$k_D$, admit parameterized approximation schemata. 
\end{corollary}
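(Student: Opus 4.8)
The plan is to read the statement as a list of immediate applications of Theorem~\ref{diffschema}: for each problem in the list we decide whether the primal problem is a minimization or a maximization problem, invoke the corresponding item of Theorem~\ref{diffschema}, and --- in the minimization case only --- exhibit a polynomial-time approximation algorithm of ratio $O(\log n)$, where $n$ is the size of the data. Everything else is already contained in the proof of Theorem~\ref{diffschema}.

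I would first handle the three maximization problems \is, \textsc{max clique} and \textsc{max set packing}. Each is a maximization subset problem (a feasible solution is a set of vertices, or of sets for \textsc{max set packing}, and its value is its cardinality), so Item~\ref{dyo} of Theorem~\ref{diffschema} applies with no hypothesis at all and yields a parameterized approximation schema for the respective D-version. This is the only available route for these three, since all of them are notoriously hard to approximate in polynomial time; the strength of Item~\ref{dyo} is precisely that it is insensitive to that.

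For the remaining, minimization, problems I would invoke Item~\ref{ena} of Theorem~\ref{diffschema}, which asks only for a polynomial-time $c\log n$-approximation for some constant $c>0$. For \SC this is the greedy algorithm, whose ratio is the $n$-th harmonic number $H_n=\Theta(\log n)$; for \DS it is again the greedy algorithm, via the standard reduction to \SC in which the set attached to a vertex is its closed neighbourhood, of ratio $1+\ln n$; for \fvs one even has a constant ratio, the classical $2$-approximation for feedback vertex set in undirected graphs, and a constant is \emph{a fortiori} $O(\log n)$; for \textsc{min edge coloring}, with the edge set (of size $m$) taken as the data, the constructive form of Vizing's theorem produces in polynomial time a proper edge colouring with at most $\Delta+1$ colours, hence a ratio at most $4/3$ since the chromatic index is at least $\Delta$; and for \textsc{max edge coloring} --- which, notwithstanding its name, is a minimization problem, again with the edge set as data --- a polynomial-time approximation of ratio $2$ is available. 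In each of these cases Item~\ref{ena} then delivers a parameterized approximation schema for the D-version.

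The one spot where a little care is needed, and the nearest thing to an obstacle here, is the bookkeeping about what ``$n$'' denotes in Theorem~\ref{diffschema}: for \SC it is the number of sets and for the two edge-colouring problems it is the number of edges, and one must use that very quantity both when forming the D-version and when quoting the polynomial approximation ratio, so that the hypothesis ``$\rho\leq c\log n$'' and the exhaustive-search branch of Theorem~\ref{diffschema} refer to the right object. Once this is settled, the corollary is nothing more than the union of the corresponding instances of Theorem~\ref{diffschema}.
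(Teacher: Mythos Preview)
Your proposal is correct and follows exactly the route the paper intends: the corollary is stated in the paper without proof, as an immediate consequence of Theorem~\ref{diffschema}, and you have supplied precisely the case-by-case verification that makes this explicit. Your split into the three maximization problems handled by Item~\ref{dyo} and the five minimization problems handled by Item~\ref{ena} (with the listed polynomial-time ratios, including the same $4/3$ for \textsc{min edge coloring} via Vizing and $2$ for \textsc{max edge coloring} that the paper itself cites just before the corollary) is exactly right, and your remark about the bookkeeping on what ``$n$'' denotes is a fair caveat that the paper glosses over.
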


\section{Discussion}

We tackled parameterized approximability of subset problems, which constitute a very natural and popular class of combinatorial problems. We have first introduced a notion of rather strong approximability, the intersective approximability, and we have proved several strong negative results for the possibility of subset problems to be intersectively approximable. Although this notion importantly relaxes the safe approximability of~\cite{safeapprox}, it still remains quite strong to be able to produce positive approximation results. On the other hand, as safe approximability, and despite its narrowness, its merit is to give new insight in the field of parameterized approximation that is in its beginnings and needs several precision and hypothesis for stabilizing its formal framework. 

Next, we have proposed a systematic approach for approximating subset problems when parameterized by a very natural parameter, namely the parameter ``size of the instance minus standard parameter''. We showed that such parameterization is able to produce non-trivial parameterized approximation results that, in many cases, can also fit another polynomial-time approximation paradigm: the differential approximation. In any case, studying parameterized approximability of problems with respect to any parameterization under which they are proved to be hard, is very important and adds more and more information about the parameterized intractability of the world of combinatorial problems.

Let us conclude the paper, by saying some words for another equally interesting parameter, the \emph{differential parameter}. It can be defined by $\omega - k$, where~$\omega$ is the worst-case solution value~\cite{approx}. Formally, given an instance~$I$ of a combinatorial problem~$\Pi$,~$\omega(I)$ is the optimal value of a problem~$Pi'$ defined on the same set of instances and having the same feasibility constraints as~$\Pi$, but~$\Pi'$ has the opposite goal. Although for some minimization subset problems, differential and dual parameters coincide (\VC, or \SC, or \DS are such problems), this is not always the case. For \IDS, for example, the worst-solution value on an instance~$I$ is the size of a maximum independent set (that is the largest of the independent dominating sets in~$I$). So, \IDS, while in~\textbf{FPT} when parameterized by the dual parameter (as shown in the beginning of the section), it becomes \textbf{W[1]}-hard when parameterized by the differential parameter~\cite{cross-parameterization}. On the other hand, for many maximization subset problems as, for example, \is, \textsc{max clique}, \textsc{knapsack}, etc., the wost solution (of value~0) is the empty set. There, the differential parameter coincides with the standard one. For \textsc{max minimal vertex cover}, for example, the worst-solution value is the size~$\tau$ of a minimum vertex cover. In any case, a more systematic study of the complexity of exactly or approximately solving problems parameterized by the differential parameter seems to us an interesting direction of future research.

\end{document}